\documentclass{amsart}
\usepackage[draft]{optional}


\usepackage[utf8]{inputenc}
\usepackage[english] {babel} 

\usepackage[T1]{fontenc}      
\usepackage{amsmath,amsfonts,amssymb}   
\usepackage{textcomp}         
\usepackage{float}
\usepackage{standalone}
\usepackage{tikz,ifthen}
\usetikzlibrary{decorations.pathmorphing,decorations.pathreplacing,arrows,automata,patterns,positioning}
\usepackage{latexsym}


 \usepackage{graphicx} 


\usepackage{xcomment}
\usepackage{xspace}
\usepackage{cite}
\usepackage{url}
\theoremstyle{plain}
\newtheorem{theorem}{Theorem}[section] 
\newtheorem{lemma}[theorem]{Lemma} 
\newtheorem{corollary}[theorem]{Corollary} 
\theoremstyle{definition}
\theoremstyle{remark}



 \colorlet{colorCarreUpRight}{green!30!black}
 \colorlet{colorCarreUpLeft}{green!60!black}
 \colorlet{colorCarreDownRight}{yellow!50!black}
 \colorlet{colorCarreDownLeft}{yellow!90!black}
 \colorlet{colorCarreBorder}{black}
 \colorlet{colorTroisQuart}{white}
 \colorlet{colorUnQuartBas}{red!30}
 \colorlet{colorSignalCarres}{red}
 \colorlet{colorSignalQuart}{black}
 \colorlet{colorSignalBas}{cyan}

 \tikzstyle{carreUpRight}=[pattern=north west lines,pattern color=black]
 \tikzstyle{carreDownRight}=[pattern=north west lines,pattern color=black]
 \tikzstyle{carreUpLeft}=[color=colorCarreUpLeft]
 \tikzstyle{carreDownLeft}=[color=colorCarreDownLeft]
 \tikzstyle{carreBorder}=[color=colorCarreBorder]
 \tikzstyle{troisQuart}=[color=colorTroisQuart]
 \tikzstyle{unQuartBas}=[pattern=dots]
 \tikzstyle{signalCarres}=[color=colorSignalCarres]
 \tikzstyle{signalBas}=[color=colorSignalBas]
 \tikzstyle{signalQuart}=[color=colorSignalQuart]
 
 \tikzstyle{bGcarreUpRight}=[fill=colorCarreUpRight]
 \tikzstyle{bGcarreUpLeft}=[color=colorCarreUpLeft]
 \tikzstyle{bGcarreDownLeft}=[color=colorCarreDownLeft]
 \tikzstyle{bGcarreDownRight}=[color=colorCarreDownRight]
 \tikzstyle{bGcarreBorder}=[color=colorCarreBorder]
 \tikzstyle{bGtroisQuart}=[color=colorTroisQuart]
 \tikzstyle{bGunQuartBas}=[color=colorUnQuartBas]
 \tikzstyle{bGsignalCarres}=[color=colorSignalCarres]
 \tikzstyle{bGsignalBas}=[color=colorSignalBas]
 \tikzstyle{bGsignalQuart}=[color=colorSignalQuart]


\date{}

\newcommand{\ZZ}{\mathbb{Z}}

\newcommand{\NN}{\mathbb{N}}

\newcommand{\pizu}{$\Pi^0_1$\xspace}
\newcommand{\settilings}[1]{\mathcal T(#1)}
\newcommand{\turinf}{\leq_T}
\newcommand{\turequiv}{\equiv_T}

\begin{document}
\title{\pizu sets and tilings}
\author{Emmanuel Jeandel}
\email{emmanuel.jeandel@lif.univ-mrs.fr}
\author{Pascal Vanier}
\email{pascal.vanier@lif.univ-mrs.fr}
\address{Laboratoire d'Informatique Fondamentale de Marseille}

\begin{abstract}
  In this paper, we prove that given any \pizu subset $P$ of $\{0,1\}^\NN$ there
is a tileset $\tau$ with  a set of configurations $C$ such that $P\times\ZZ^2$
is recursively homeomorphic to $C\setminus U$ where $U$ is a computable set of
configurations.  As a consequence, if $P$ is countable, this tileset has the
exact same set of Turing degrees.
\end{abstract}

\maketitle

\section*{Introduction}\label{S:intro}

Wang tiles have been introduced by Wang \cite{WangII} to study fragments of 
first order logic. Knowing whether a tileset can tile the plane with a given tile at the 
origin (also known as the origin constrained domino problem) was proved undecidable 
also by Wang \cite{Wang3}. Knowing whether a tileset can tile the plane in the general case
was proved undecidable by Berger \cite{BergerPhd,Berger2}.

Understanding how complex, in the sense of recursion theory, the 
tilings of a given tileset can be is a question that was first studied by Myers  \cite{Myers} 
in 1974. Building on the work of Hanf \cite{Hanf1}, he gave a tileset with no recursive tilings.
Durand/Levin/Shen \cite{BDLS} showed, 40 years later, how to build a tileset for which all
tilings have high Kolmogorov complexity.

A \pizu-set is an effectively closed subset of $\{0,1\}^\mathbb{N}$, or
equivalently the set of oracles on which a given Turing machine halts.
\pizu-sets occur naturally in various areas in computer science and
recursive mathematics, see e.g. \cite{CenzerRec, SimpsonSurvey} 
and the upcoming book \cite{Cenzerbook}. 
It is easy to see that the set of tilings of a given tileset is a \pizu-set
(up to a recursive coding of $Q^{\mathbb{Z}^2}$ into $\{0,1\}^\mathbb{N}$).
This has various consequences. As an example, every non-empty tileset
contains a tiling which is not Turing-hard (see Durand/Levin/Shen \cite{BDLS} for a
self-contained proof). The main question is how different the sets of tilings
are from \pizu-sets. In the context of one-dimensional symbolic dynamics,
some answers to these questions were given by Cenzer/Dashti/King/Tosca/Wyman \cite{Dashti, CenzerShift,CenzerCiE}.

The main result in this direction was obtained by Simpson \cite{MEdv},
building on the work of Hanf and Myers: for
every \pizu-set $S$, there exists a tileset whose set of tilings have the
same \emph{Medvedev} degree as $S$. The Medvedev degree roughly relates to the
``easiest'' Turing degree of $S$. What we are interested in 
is a stronger result: \emph{can we find for every \pizu-set $S$ a tileset whose
set of tilings have the same Turing degrees ?} We prove in this article that
this is true if $S$ contains a recursive point. More exactly we build
(theorem \ref{thm:pizuhomtiling})  for every \pizu-set $S$ a set of tilings for which
the set of Turing degrees is exactly the same as for $S$, possibly with the
additional Turing degree of recursive points.
In particular, as every \emph{countable} \pizu-set contains a recursive
point, the question is completely solved for countable sets: the sets of
Turing degrees of countable \pizu-sets are the same as the sets of Turing
degrees of countable sets of tilings. In particular, there exist countable
sets of tilings with non-recursive points. This can be thought as a
two-dimensional version of theorem~8 in \cite{CenzerCiE}.

This paper is organized as follows. After some preliminary definitions, we start with a
quick proof of a generalization of Hanf, already implicit in Simpson
\cite{MEdv}. We then build a very specific tileset, which forms a
grid-like structure while having only countably many tilings. This tileset
will then serve as the main ingredient in the theorem in the last section.

\section{Preliminaries}\label{S:defs}
 \subsection{\pizu sets and degrees}
 A \emph{\pizu set} $P\subseteq\{0,1\}^\NN$ is a set for which there exists a Turing machine that
 given $x\in\left\{ 0,1 \right\}^\NN$ as an oracle halts if and only if $x\not\in P$. Equivalently,
a subset $S\subseteq\{0,1\}^\NN$ is \pizu if there exists a recursive set $L$ so that $w\in S$ if no 
prefix of $w$ is in $L$.

 We say that two sets $S,S'$ are \emph{recursively homeomorphic} if there exists a bijective recursive function $f:S\rightarrow S'$.

 A point $x$ of a set $S\subseteq\left\{ 0,1 \right\}^\NN$ is \emph{isolated} if it has a prefix
 that no other point of $S$ has. The \emph{Cantor-Bendixson derivative} $D(S)$ of $S$ is the set
 $S$ without its isolated points. We define inductively $S^{\left( \lambda \right)
 }$ for any ordinal $\lambda$:
 \begin{itemize}
   \item $S^{(0)}=S$
   \item $S^{(\lambda+1)}=D\left( S^{\left( \lambda\right)}) \right)$
   \item $S^{(\lambda)} = \bigcap_{\gamma<\lambda} S^{(\gamma)}$ when $\lambda$ is limit.
 \end{itemize}

 The \emph{Cantor-Bendixson rank} of $S$, noted $CB(S)$, is defined as the first ordinal $\lambda$ such that
 $S^{(\lambda)}=S^{(\lambda+1)}$. An element $x$ is of rank $\lambda$ in $S$ if $\lambda$ is the
 least ordinal such that $x\not\in S^{(\lambda)}$.
 
 See Cenzer/Remmel~\cite{CenzerRec} for \pizu sets and Kechris~\cite{Kechris} for Cantor-Bendixson rank and derivative. 
 
 For $x,y\in\{0,1\}^\NN$ we say that $x$ is \emph{Turing-reducible} to $y$ if $y$ is computable by a Turing machine using $x$ as an oracle and we write $y\turinf x$. If $x\turinf y$ and $y\turinf x$, we say that $x$ and $y$ are  \emph{Turing-equivalent} and we write $x\turequiv y$. The \emph{Turing degree} of $x\in\{0,1\}^\NN$ is its equivalence class under the relation $\turequiv$.
\newpage
 \subsection{Tilings and SFTs} 
 \emph{Wang tiles} are unit squares with colored edges which may not be flipped or rotated. A
 \emph{tileset} $T$ is a finite set of Wang tiles. A \emph{configuration} is a mapping
 $c:\ZZ^2\rightarrow T$ assigning a Wang tile to each point of the plane. If all adjacent tiles of a
 configuration have matching edges, the configuration is called a tiling. The set of all tilings
 of $T$ is noted $\settilings{T}$. We say a tileset is \emph{origin constrained} when the tile at position $(0,0)$ is forced, that is to say, we only look at the valid tilings having a given tile $t$ at 
 the origin.
 
 A \emph{Shift of Finite Type (SFT)} $X\subseteq\Sigma^{\ZZ^2}$ is defined by $(\Sigma,F)$ 
 where $\Sigma$ is a finite alphabet and $F$ a finite set of \emph{forbidden patterns}. 
 A \emph{pattern} is a coloring of a finite portion $P\subset \ZZ^2$ of the plane. A point $x$ is
 in $X$ if and only if it does not contain any forbidden pattern of $F$ anywhere. 
 In particular, the set of tilings of a Wang tileset is a SFT. Conversely, any SFT is 
 recursively homeomorphic to a Wang tileset.  More information on SFTs may be found in Lind and 
 Markus' book \cite{LindMarkus}. 

A set of configurations $X\subseteq\Sigma_X^{\ZZ^2}$ is a \emph{sofic shift} iff there exists a SFT $Y\subseteq\Sigma_Y^{\ZZ^2}$  and a local map $f:\Sigma_Y\rightarrow \Sigma_X$ such that for any point $x\in X$, there exists a point $y\in Y$ such that for all $z\in\ZZ^2, x(z)=f(y(z))$.

 The notion of \emph{Cantor-Bendixson derivative} is defined on configurations in a similar way as with
 \pizu sets. This notion was introduced for tilings by Ballier/Durand/Jeandel \cite{BDJSTACS}. A configuration $c$ is said to be \emph{isolated} in a set of configurations
 $C$ if there exists a pattern $P$ such that $c$ is the only configuration of $C$ containing $P$.
 The Cantor-Bendixson derivative of $C$ is noted $D(C)$ and consists of all configurations of $C$
 except the isolated ones. We define $C^{(\lambda)}$ inductively for any ordinal $\lambda$ as above.

 \section{\pizu sets and origin constrained tilings}\label{S:hanf}
A straighforward corollary  of Hanf \cite{Hanf1} is that \pizu subsets of $\{0,1\}^\NN $ 
and origin constrained 
tilings are recursively isomorphic. This is stated explicitely in Simpson \cite{MEdv}.
\begin{theorem}
  Given any \pizu subset $P$ of $\{0,1\}^\NN$, there exists a tileset and a tile $t$ such that 
  each origin constrained tiling with this tileset describes an element of $P$.
\end{theorem}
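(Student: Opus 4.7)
The plan is to use the classical Hanf-style simulation of Turing machines by tilesets, enriched so that the oracle is read directly off the tiling. Since $P$ is \pizu, I can fix an oracle Turing machine $M$ for which, for every $x\in\{0,1\}^\NN$, $M$ with oracle $x$ halts if and only if $x\notin P$. I will design a tileset $T$ with a distinguished origin tile $t$ such that the first row of any valid origin-constrained tiling encodes a candidate oracle $x$, and the rows above simulate the computation of $M^x$; the tiling will extend to the whole plane precisely when $M^x$ never halts, i.e.\ when $x\in P$.

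First, I would build a geometric skeleton. The tile $t$ anchors a quadrant: arrow and corner tiles propagate signals rightward and upward from the origin, partitioning the upper right quadrant into a grid of unit cells indexed by $\NN\times\NN$. Outside the quadrant I would use ``blank'' tiles whose edges carry no nontrivial constraints, so that any valid filling of the quadrant extends to a full configuration of $\ZZ^2$. Next, I would place on the bottom row a one-symbol layer carrying a bit in $\{0,1\}$ per cell; the sequence of these bits is the oracle $x$ being encoded.

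Above the bottom row, I would superimpose the standard tile-based simulation of a Turing machine, in which the $n$-th row of cells encodes the $n$-th configuration of $M$, with the head moving horizontally and the transition function enforced by constraints between consecutive rows. Whenever the simulated head wants to query the $i$-th oracle bit, a dedicated signal layer sends a query straight down column $i$ to the bottom row, reads $x_i$ there, and propagates the answer back up in time to be used in the next transition. The key design choice is to include no tile whose top edge carries the halting state of $M$: halting computations cannot extend upward, while non-halting ones can be completed to a valid tiling of the quadrant. The map sending a tiling to the oracle written along its first row is then a bijection with $P$, and it is recursive in both directions, since reading the oracle off is immediate and, conversely, the remainder of the tiling is entirely determined by simulating $M^x$.

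The main obstacle I expect is making the oracle-query signals interact cleanly with the moving computation head: the downward query, the bit readout, and the upward reply must all happen at the columns and times prescribed by the simulation, and signal-carrying colors must never appear without a genuine head emitting them. I would handle this in the standard way, by bundling the signal state into the head tile itself so that signals are only emitted and absorbed at head positions, and by forbidding signal colors on background tiles. Once this synchronization is correctly set up, the desired correspondence between origin-constrained tilings of $T$ and elements of $P$ follows immediately from the correctness of the Turing machine simulation.
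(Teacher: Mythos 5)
Your proposal is essentially the paper's proof: a Hanf/Robinson-style origin-constrained simulation of a machine that halts iff its input is not in $P$, with the candidate $x\in\{0,1\}^\NN$ written as arbitrary bits on the bottom (initial-tape) row and no tiles allowing the halting state, so the plane tiles exactly when $x\in P$. The only difference is that the paper sidesteps your delicate oracle-query/answer signal machinery entirely: since $x$ already sits on the initial tape to the right of the starting head, the machine is just an ordinary Turing machine reading its input by head moves, which is all that is needed for a \pizu set.
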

\begin{proof}
    We take the basic encoding of Turing machines as stated in Robinson~\cite{Robinson} for instance. 
  We modify the bottom tiles, ie the tiles containing the initial tape, such that instead of being able
to contain only the blank symbol, they can contain only 0s or 1s on the
  right of the starting head. The Turing machine we encode is the one that given $x\in
  \{0,1\}^\NN$ as an input halts if and only if $x\not\in P$. Then the constrained tilings, having at the origin the tile with the starting head of the Turing machine, are exactly the runs of the Turing
  machine on the members of $P$.  
\end{proof}
\begin{corollary}
  Any \pizu subset $P$ of $\{0,1\}^\NN$ is recursively homeomorphic to an origin constrained
  tileset.
\end{corollary}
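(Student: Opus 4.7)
The plan is to upgrade the construction of the previous theorem into an explicit recursive bijection. The theorem already provides, for a given $\Pi^0_1$ set $P$, a tileset $\tau$ and an origin tile $t$ (the tile carrying the initial head of the Turing machine $M$ that halts on $x$ iff $x\not\in P$) such that every valid origin-constrained tiling of $\tau$ encodes a non-halting computation of $M$ on some input $x\in P$. I would use that correspondence to define both directions of the recursive homeomorphism.

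First, I would define $f: \settilings{\tau,t} \to P$ that, given an origin-constrained tiling $c$, reads off the symbols $c(1,0), c(2,0), c(3,0),\dots$ on the bottom row to the right of the origin and outputs the element $x\in\{0,1\}^\NN$ these tiles encode. Since the bottom tiles are precisely those carrying a $0$ or a $1$ of the initial tape, the value $f(c)(n)$ is computed by examining a single tile of $c$, so $f$ is clearly recursive. By the theorem, $f(c)\in P$.

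In the other direction, I would define $g: P \to \settilings{\tau,t}$ by letting $g(x)$ be the tiling that writes $x$ on the positive $x$-axis of the bottom row and then carries out the unique Robinson-style simulation of $M$ on input $x$ above, together with the forced filler tiles (blank/background) to the left of the head and below the tape. Because Robinson's encoding makes the tile at every position $(i,j)$ a computable function of the local state of the simulation, $g(x)$ can be produced tile-by-tile from $x$; hence $g$ is recursive.

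The main point to verify is that $f$ and $g$ are mutually inverse, and this reduces to a uniqueness claim: for each $x\in P$ there is exactly one origin-constrained tiling of $\tau$ whose bottom-row input is $x$. This is the only genuine obstacle, and it is handled by the determinism of $M$ together with the rigidity of Robinson's encoding: once the initial tape and the position of the head are fixed, every other tile of the configuration is forced, both in the computation half-plane (by the transition function of $M$) and in the complementary region (by the unique background tile). With uniqueness in hand, $f\circ g = \mathrm{id}_P$ and $g\circ f = \mathrm{id}_{\settilings{\tau,t}}$ are immediate, so $f$ is the desired recursive homeomorphism between $P$ and the origin-constrained tileset $(\tau,t)$.
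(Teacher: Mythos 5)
Your proposal is correct and follows essentially the same route as the paper, which leaves the corollary proofless as an immediate consequence of the theorem's claim that the origin-constrained tilings are \emph{exactly} the runs of $M$ on members of $P$; you simply make explicit the two recursive maps (reading the input off the bottom row, and building the forced space-time diagram from $x$) and the rigidity/uniqueness argument underlying that claim.
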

 
\section{The tileset}\label{S:tileset}

The main problem in the construction of Hanf is that tilings which do not
have the given tile at the origin can be very wild : they may correspond to
configurations with no computation (no head of the Turing Machine) or computations starting from
an arbitrary (not initial) configuration.
A way to solve this problem is described in \cite{Myers} but is unsuitable
for our purposes.

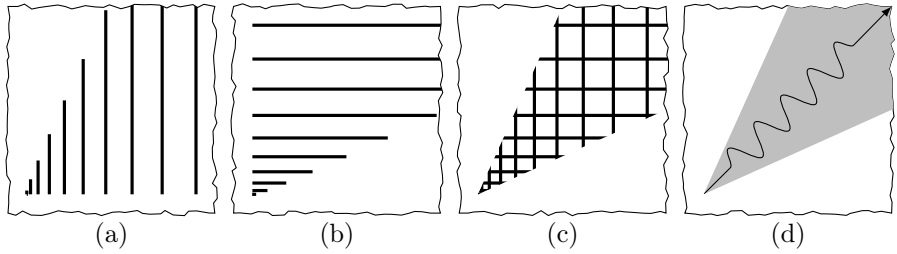
\begin{figure}[htbp]
 \begin{center}
 \begin{tikzpicture}[scale=0.025]
  \def\lx{10};
  \def\linewidth{1.2pt};
  \begin{scope}
  \begin{scope}
    \clip[draw,decorate,decoration={random steps, segment length=3pt, amplitude=1pt}] (-10,-10) rectangle (100,100);
    \foreach \i in {0,...,\lx}{
      \def\x{\i*\i-\i};
      \draw[line width=\linewidth] (\x,0) -- (\x,\i*\i*2);
    }
  \end{scope}
  \node[below] at (45,-10) {(a)};
  \end{scope}
  \begin{scope}[shift={(120,0)}]
  \begin{scope}
    \clip[draw,decorate,decoration={random steps, segment length=3pt, amplitude=1pt}] (-10,-10) rectangle (100,100);
    \foreach \i in {0,...,\lx}{
      \def\x{\i*\i-\i};
      \draw[line width=\linewidth] (0,\x) -- (\i*\i*2,\x);
    }
  \end{scope}
  \node[below] at (45,-10) {(b)};
  \end{scope}
  \begin{scope}[shift={(240,0)}]
  \begin{scope}
    \clip[draw,decorate,decoration={random steps, segment length=3pt, amplitude=1pt}] (-10,-10) rectangle (100,100);
    \begin{scope}
      \clip (0,0) -- (\lx*\lx*2,\lx*\lx-\lx) -- (\lx*\lx-\lx,\lx*\lx*2) -- cycle;
      \foreach \i in {0,...,\lx}{
        \def\x{\i*\i-\i};
        \draw[line width=\linewidth] (0,\x) -- (\i*\i*2,\x);
        \draw[line width=\linewidth] (\x,0) -- (\x,\i*\i*2);
      }
    \end{scope}
  \end{scope}
  \node[below] at (45,-10) {(c)};
  \end{scope}
   \begin{scope}[shift={(360,0)}]
  \begin{scope}[even odd rule]
    \clip[draw,decorate,decoration={random steps, segment length=3pt, amplitude=1pt}] (-10,-10) rectangle (100,100);
    \begin{scope}
      \clip (0,0) -- (\lx*\lx*2,\lx*\lx-\lx) -- (\lx*\lx-\lx,\lx*\lx*2) -- cycle;
      \fill[color=lightgray] (-10,-10) rectangle (110,110);
      \draw[-latex,decorate,decoration={snake,amplitude=2mm,segment length=5mm,pre length=5mm,post length=5mm}]
      (0,0) -- (\lx*\lx,\lx*\lx);

    \end{scope}
  \end{scope}
  \node[below] at (45,-10) {(d)};
  \end{scope}

 \end{tikzpicture}
 \end{center}
 \caption{The tiling in which to encode the Turing machines}
 \label{fig:layers}
\end{figure}

Our idea is as follows: We build a tileset which will contain, among others,
the
\emph{sparse grid} of figure~\ref{fig:layers}c. 
The main point is that all others tilings of the tileset will have at most
one intersection point of two black lines.
This means that if we put computation cells of a given
Turing machine in the intersection points, 
every tiling which is not of the form of figure~\ref{fig:layers}c
will contain at most one cell of the Turing machine, thus will contain no
computation.

\begin{figure}[htbp]
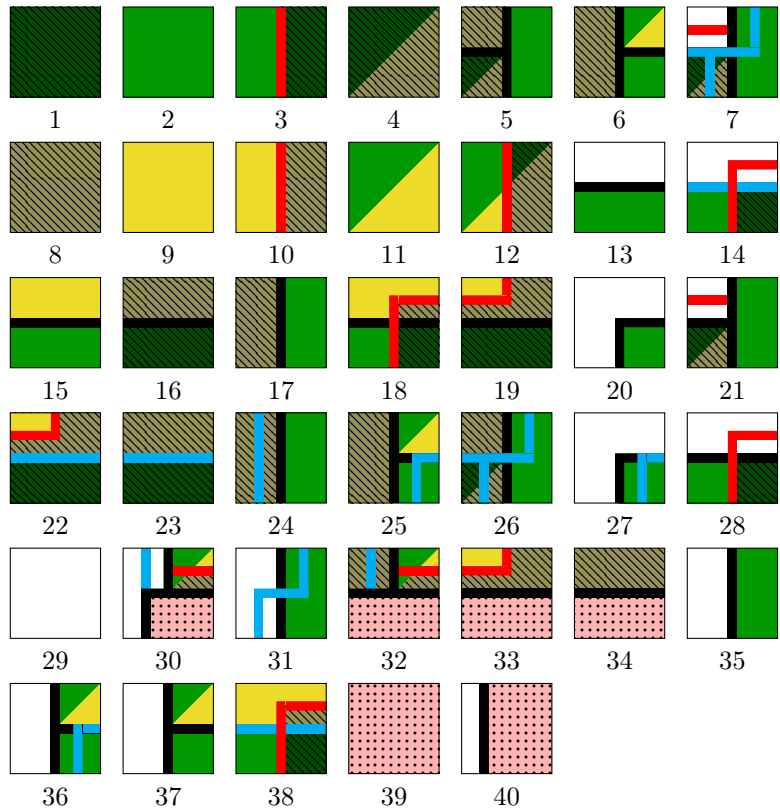

\begin{center}
\input figures/tileset.tex
\end{center}
\caption{Our set of Wang tiles $T$.}
\label{fig:tileset}
\end{figure}

To do this construction, we will first draw increasingly big
and distant columns as in figure~\ref{fig:layers}a  and then
superimposing the same construction for rows as in figure~\ref{fig:layers}b,
leading to the grid of figure~\ref{fig:layers}c.

It is then fairly straightforward to see how we can encode a Turing machine inside a configuration
having the skeleton of figure~\ref{fig:layers}c by looking at it diagonally: time increases going
to the north-east and the tape is written on the north west - south east diagonals\footnote{Note that we will have to skip one diagonal out of two in our construction, in order for the tape to increase at the same rate as the time.}.

Our set of tiles $T$ of figure~\ref{fig:tileset} gives the skeleton of figure~\ref{fig:layers}a 
when forgetting everything but the black vertical borders. We will prove in this section that it
is countable.
We set here the vocabulary:
\begin{itemize}
  \item a vertical line is formed of a vertical succession of tiles containing a vertical black line  
    (tiles 5, 6, 17, 21, 24, 25, 26, 27, 31, 35, 36, 37). 
  \item a horizontal line is formed of a horizontal succession of tiles containing a horizontal
    black line (tiles 13, 14, 15, 16, 22, 23, 38) or a bottom signal,
  \item the bottom signal
  \tikz[scale=0.25]{\fill[bGsignalBas] (0,0) rectangle (1,0.2);\fill[signalBas] (0,0) rectangle (1,0.2);}
  is formed by a connected path of tiles among
  (30, 31, 27, 14, 7, 36, 38)
  \item the red signal 
  \tikz[scale=0.25]{\fill[bGsignalCarres] (0,0) rectangle (0.2,1);\fill[signalCarres] (0,0) rectangle (0.2,1);}
  is formed by a connected path of tiles containing a red line (tiles 
    among 3 ,7, 10, 12, 14, 19, 22, 32, 33, 38).
  \item tile 30 is the corner tile
  \item tiles 30, 32, 33, 34 are the bottom tiles
\end{itemize}

\begin{lemma}\label{lem:twolines}
  The tileset $T$ admits at most one tiling with two or more vertical lines.
\end{lemma}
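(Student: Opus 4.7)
The plan is to show that the presence of two vertical lines rigidifies the tiling completely, forcing it to be the sparse-grid configuration of figure~\ref{fig:layers}a. The strategy is to locate an absolute anchor in the tiling (the corner tile 30) and then propagate the constraints deterministically in both directions.

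I would first do a local analysis of a single vertical line. By inspecting the listed tiles that carry a vertical black segment (5, 6, 17, 21, 24, 25, 26, 27, 31, 35, 36, 37), one checks which tiles are allowed at its bottom and top. The intended picture is that every vertical line is finite: it must be capped above by the red signal turning horizontal, and it must be anchored below by the corner tile 30, which I expect to be the only tile simultaneously initiating a vertical line and launching a bottom signal to its right.

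Next, assume a tiling contains two vertical lines at columns $x_1 < x_2$. The bottom signal emanating from the foot of the left line cannot turn freely, since the only tiles carrying it are those listed (30, 31, 27, 14, 7, 36, 38). I would trace it across the gap and show that it can only terminate at the foot of another vertical line. A careful tile-by-tile analysis should then force $x_2-x_1 = 2i$ for some $i\geq 1$, and force the height of the left column to be exactly $2i^2$, by examining at what height the red signal leaving the top of the left line meets the right line. Hence the two columns must sit at the positions and heights of consecutive columns of the sparse grid.

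Finally, once one consecutive pair is pinned down, the rest of the tiling is forced by induction in both directions: the red signal from the top of each column combined with the bottom signal leaving its base deterministically dictate the position and height of the next column, and propagating leftward eventually stops at the index-$1$ column of width~$0$. The main obstacle is the middle step: verifying rigorously that the bottom and red signals cannot deviate, bifurcate, or backtrack between two given vertical lines requires a tedious but finite case analysis of the tileset of figure~\ref{fig:tileset}, and it is here that all the cleverness of the construction is concentrated.
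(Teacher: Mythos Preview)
Your high-level plan --- find an anchor and propagate constraints --- is the right instinct, but the concrete mechanism you describe does not match the tileset, and the paper's argument runs through a different device.

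Two specific points where your picture breaks down. First, tile~30 is \emph{not} a vertical-line tile (it is absent from the list 5, 6, 17, 21, 24, 25, 26, 27, 31, 35, 36, 37), and it occurs exactly once in the tiling~$\alpha$, at the global corner; so your claim that every vertical line is ``anchored below by the corner tile 30'' cannot be right, and the argument that the bottom signal runs foot-to-foot between corner tiles collapses. Second, the spacing between consecutive vertical lines in~$\alpha$ is $k+1$ for $k=1,2,3,\dots$ (see the formula $f(n)=(n+1)(n+2)/2-1$ at the end of section~\ref{S:tileset}), not $2i$; your quantitative constraints are simply the wrong ones.

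The actual mechanism, which you are missing, is this: once you have two vertical lines at distance $k+1$, horizontal black lines are forced between them and they cut the strip into \emph{squares}. The red signal runs vertically inside each square and is shifted one unit to the right every time it crosses a horizontal line; this is a counter, and it forces exactly $k$ squares in that column. The bottom signal does not live at the feet of the lines but passes through the top edge of the bottommost square, and its geometry forces the neighbouring columns to have widths $k-1$ and $k+1$. Iterating this relation in both directions determines every column width, and going leftwards the widths decrease until the corner tile is forced --- that is how tile~30 enters the argument, not as a local anchor at each line but as the terminal point of a global induction on column widths.
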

\begin{proof}
The idea of the construction is to force that whenever there are two
vertical lines, then the only possible tiling is the one of figure~\ref{fig:confalpha}.
Note that whenever the corner tile appears in a tiling, it is necessarily a shifted version of the tiling on figure~\ref{fig:confalpha}. 

Suppose that we have a tiling in which two vertical lines appear. Suppose they 
are at distance $k+1$. Necessarily there must be horizontal lines between them forming
squares. Inside these squares there must be a red signal: inside each square, this red
signal is vertical, it is shifted to the right each time it crosses a horizontal line.
This ensures that there are exactly $k$ squares in this column. Furthermore, the bottom square has
necessarily a bottom signal going through its top horizontal line. The bottom signal forces
the square of the column before to be of size $k-1$ and the square of the column after to be
of size exactly $k+1$.

\end{proof}

\begin{figure}
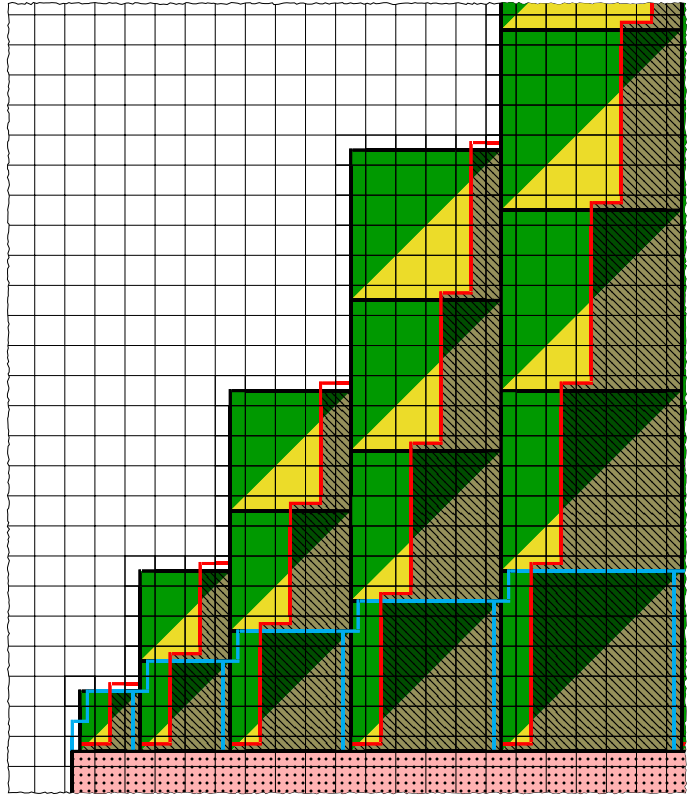
 
 \begin{center}
 \input figures/tmtiling.tex
 \end{center}
 \caption{Tiling $\alpha$: the unique valid tiling of $T$ in which there are 2 or more vertical
 lines.}
 \label{fig:confalpha}
\end{figure}

\begin{lemma}\label{lem:countable}
The tileset $T$ admits a countable number of tilings.
\end{lemma}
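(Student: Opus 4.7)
The plan is to stratify the tilings of $T$ by the number of vertical lines they contain and show each stratum is countable. By Lemma~\ref{lem:twolines}, the stratum of tilings containing two or more vertical lines has size at most one (namely the distinguished tiling $\alpha$ of Figure~\ref{fig:confalpha}). It therefore suffices to prove that the set of tilings with at most one vertical line is countable.

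Consider first the tilings containing exactly one vertical line. There are countably many choices for the column $k \in \ZZ$ at which this line sits, so I would fix $k$ and argue that the rest of the configuration admits only countably many extensions. The proof of Lemma~\ref{lem:twolines} reveals that horizontal lines, the red signal, and the bottom signal all arise from closed ``squares'' trapped between two vertical lines; with only a single vertical line present, these features must either be absent entirely or persist as isolated infinite horizontal structures whose position ranges over $\ZZ$. The local matching constraints of $T$ then propagate the rest of the tiling uniquely away from these countably many choices.

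For tilings containing no vertical line at all, I would restrict attention to the sub-tileset of $T$ consisting of tiles with no vertical black segment, and argue directly from Figure~\ref{fig:tileset} that this sub-tileset admits only countably many configurations. Intuitively, the absence of vertical lines forbids the formation of squares, and hence forbids the red signal as well; what remains is a very rigid background, possibly decorated by a single infinite horizontal line at some row $h \in \ZZ$, which again gives only countably many options.

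The main obstacle will be this last case: ruling out a continuum of ``background'' configurations hidden among the tiles of $T$ that lack vertical segments. This requires a careful tile-by-tile inspection of the matching rules in Figure~\ref{fig:tileset}, together with the verification that each residual degree of freedom (positions of isolated signals or lone horizontal lines) ranges over a countable parameter set. Once this rigidity is established, the conclusion follows from the fact that a countable union of countable sets is countable.
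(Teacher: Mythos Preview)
Your stratification by the number of vertical lines is exactly the paper's approach, and the use of Lemma~\ref{lem:twolines} for the case of two or more lines is identical. The paper's case analysis is somewhat sharper than yours: with exactly one vertical line it asserts there are at most \emph{two} horizontal lines (one on each side of the vertical line) and a single red signal at an arbitrary finite distance, giving countably many tilings; with no vertical line it asserts at most one horizontal line and at most one occurrence of the red signal, and in fact claims only \emph{finitely} many tilings (up to shift) in that stratum, exhibited explicitly in Figure~\ref{fig:otherconf}.

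One point in your sketch is not quite right and would cause trouble if you tried to carry it out: in the no-vertical-line case you argue that the absence of squares forbids the red signal. This is false for $T$; the red signal runs through several tiles that carry no vertical segment (e.g.\ the bottom tiles $32,33$ and tiles such as $3,10,12,19$), so it can and does survive without any vertical line present. The correct claim, which the paper makes, is that it can appear \emph{at most once}. Your acknowledged ``main obstacle'' --- the tile-by-tile inspection ruling out a continuum of backgrounds --- is precisely where this matters, and the paper discharges it by appeal to the exhaustive enumeration in Figure~\ref{fig:otherconf} rather than by an abstract argument.
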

\begin{proof}
  Lemma~\ref{lem:twolines} states that there is only one tiling that has more than
  2 vertical lines. This means that the other tilings have at most one such line.
  \begin{itemize}
    \item If a tiling has exactly one vertical line, then it can have at most
      two horizontal lines: one on the left of the vertical one and one on the right.
      A red signal can then appear on the left or the right of the vertical line arbitrary
      far from it. There is a countable number of such tilings. 
    \item If a tiling has no vertical line, then it has at most one horizontal line.
      A red signal can then appear only once. There is a finite number of such
      tilings. 
  \end{itemize}
  There is a countable number of tilings that can be obtained with the tileset $T$.
  All obtainable tilings are shown in figure~\ref{fig:otherconf}
  and~\ref{fig:confalpha}.
  
\end{proof}

\begin{figure}
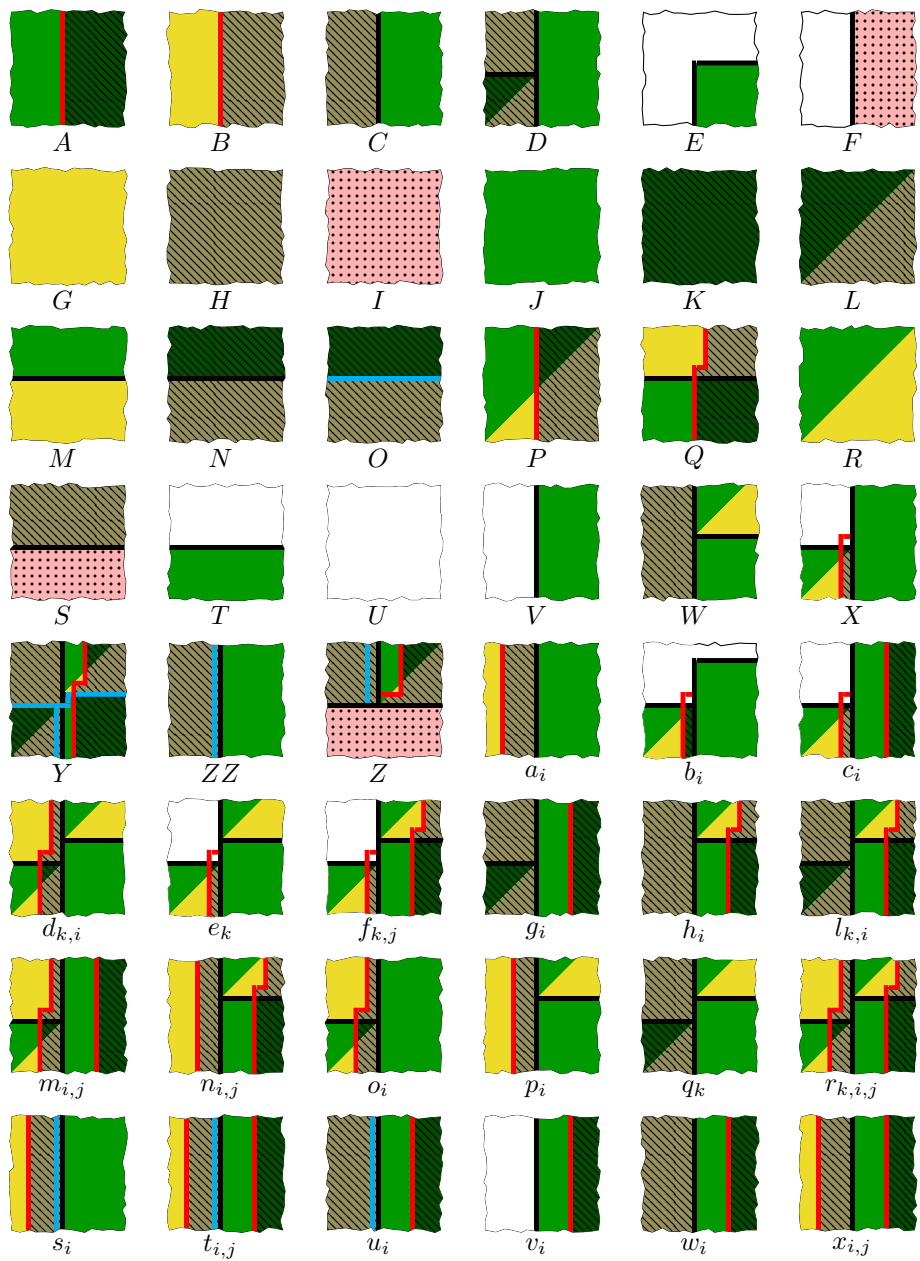

  \begin{center}
  \input figures/otherconf.tex
  \end{center}
  \caption{The other configurations: the $A-ZZ$ configurations are unique (up to
shift), and the configurations with subscripts $i,j\in\NN,k\in\ZZ^2$ represent
the fact that distances between some of the lines 
  can vary. Note that configuration $ZZ$ cannot have a red signal on its left,
because it would
  force another vertical line.}
  \label{fig:otherconf}
\end{figure}

By taking our tileset $T=\{1,\dots,40\}$ and mirroring all the tiles along the
south west-north east diagonal,
we obtain a tileset $T'=\{1',\dots,40'\}$ with the exact same properties, except
it enforces the squeleton of
figure~\ref{fig:layers}b. Remember that whenever the corner tile appeared in a
tiling, then necessarily
this tiling was $\alpha$. The same goes for $T'$ and its corner tile. 
We hence construct a third tileset $\tau =
\left(T\setminus\left\{30\right\}\times
T'\setminus\left\{30'\right\}\right)\cup\left\{(30,30')\right\}$. The corner
tile $(30,30')$ of
$\tau$ has the property that whenever it appears, the tiling is the
superimposition of the
skeletons of figures~\ref{fig:layers}a and~\ref{fig:layers}b with the corner
tiles at the same
place: there is only one such tiling, call it $\beta$. 

The skeleton of figure~\ref{fig:layers}c is obtained if we forget about the
parts of
the lines of the $T$ layer (resp. $T'$) that are superimposed to white tiles,
29' (resp. 29), 
of $T'$ (resp. $T$).

As a consequence of lemma~\ref{lem:countable}, $\tau$ is countable. And as a
consequence of
lemma~\ref{lem:twolines}, the only tiling by $\tau$ in which computation can be
embedded is $\beta$. The shape of $\beta$ is the one of
figure~\ref{fig:layers}c, the coordinates of the points of the grid are the
following (supposing tile $(30,30')$ is at the center of the grid):
 $$\left\{(f(n),f(m))\mid f(m)/4\leq f(n)\leq 4f(m)\right\}$$
 $$\left\{(f(n),f(m))\mid m/2\leq n \leq 2m\right\}$$
 
 where $f(n)=(n+1)(n+2)/2-1$.

\begin{lemma}\label{lem:rankT}
  The Cantor-Bendixson rank of $\settilings{\tau}$ is 12.
\end{lemma}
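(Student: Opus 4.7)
The plan is to compute $CB(\settilings{\tau}) = 12$ by iterating the Cantor--Bendixson derivative directly on the explicit classification of tilings given in Lemma~\ref{lem:countable} and Figures~\ref{fig:confalpha} and~\ref{fig:otherconf}. The key structural observation is that $\settilings{\tau}$ decomposes as the shift-orbit of $\beta$ (tilings containing the paired corner tile $(30,30')$) together with pairs $(c, c') \in S \times S'$ of tilings by $T$ and $T'$ in which neither component uses its unpaired corner tile, where $S = \settilings{T} \setminus \{\text{shifts of }\alpha\}$ and $S'$ is defined analogously. The shifts $\beta_k$ ($k \in \ZZ^2$) are all isolated, since the pair $(30,30')$ is a singleton pattern unique to them; they are therefore removed in the first derivative, and the remaining rank comes from $S \times S'$.

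First I would compute $CB(\settilings{T})$ (equal to $CB(\settilings{T'})$ by the symmetry of the construction) by classifying the configurations in Figure~\ref{fig:otherconf} according to their ``feature content'' (number and relative position of vertical/horizontal lines, bottom signal, red signal). For each class I would find an isolating pattern at the appropriate stage of the derivation: the richest configurations are isolated from the start, and each subscript $i, j \in \NN$ that sends two features arbitrarily far apart contributes one additional level to the hierarchy, as does each discrete choice (presence or absence of a line or signal) whose absence arises as a limit of the corresponding presence.

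Next, I would combine $CB(\settilings{T})$ and $CB(\settilings{T'})$ using the product formula for countable scattered sets, which follows from $(X \times Y)^{(n)} = \bigcup_{i+j=n} X^{(i)} \times Y^{(j)}$ and yields $CB(X \times Y) = CB(X) + CB(Y) - 1$. Applied to the product piece of $\settilings{\tau}$, and accounting for the initial derivative step that removes the $\beta_k$, this should yield rank~$12$ (consistent for example with $CB(\settilings{T}) = 7$, hence $CB(S) = 6$ and $CB(S \times S') = 11$).

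The main obstacle will be the interaction between the orbit of $\beta$ and the product $S \times S'$: some configurations of $S \times S'$ are accumulation points of $\{\beta_k\}_k$ (the grid pushed off to infinity in various directions), so one must verify carefully which points remain isolated in $\settilings{\tau}$ once the $\beta_k$ have been removed, and check that the rank contribution from this interaction does not alter the final count. Handling this bookkeeping, together with the symmetric case analysis of Figure~\ref{fig:otherconf} for $T'$, is where the real technical work lies.
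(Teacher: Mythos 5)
Your proposal follows essentially the same route as the paper: read off the rank~$6$ of $\settilings{T}\setminus\{\alpha\}$ from the classification of Figure~\ref{fig:otherconf}, apply the product formula $CB(X\times Y)=CB(X)+CB(Y)-1$ to the two independent layers to get~$11$, and account for the orbit of $\beta$ (the superimposition of $\alpha$ and $\alpha'$) to reach~$12$. You are in fact more explicit than the paper about the one genuinely delicate point — that the extra level comes not from the isolated $\beta_k$ themselves but from their accumulation onto otherwise-isolated points of the product part, which delays the derivation by one step — so the plan is sound and, if anything, more candid about where the work lies.
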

\begin{proof} 
  The Cantor-Bendixson rank of $\settilings{T}\setminus\{\alpha\}$ is 6, see
figure~\ref{fig:otherconf}, thus the rank of
  $\settilings{T}\setminus\left\{ \alpha
\right\}\times\settilings{T'}\setminus\left\{ \alpha'
  \right\}$ is 11. Adding the configurations corresponding to the
superimposition of $\alpha$ and
  $\alpha'$, $\tau$ is of rank 12.
  
\end{proof}

\section{\pizu sets and tilings}

\begin{theorem}\label{thm:pizuhomtiling}
  For any \pizu subset $S$ of $\left\{ 0,1 \right\}^\NN$ there exists a tileset
$\tau_S$ such
  that $S\times\ZZ^2$ is recursively homeomorphic to
$\settilings{\tau_S}\setminus O$ where $O$
  is a computable set of configurations.
\end{theorem}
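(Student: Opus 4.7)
The plan is to enhance the tileset $\tau$ of the previous section with a Turing machine layer. Let $M$ be the Turing machine that, on oracle input $x \in \{0,1\}^\NN$, halts iff $x \notin S$. I would define $\tau_S$ as a product of $\tau$ with a new finite alphabet encoding TM cells (states, tape symbols, and a ``blank'' value). The matching rules force that TM cell values appear only at grid intersections of the $\beta$-skeleton, that the NE diagonals represent time steps of $M$, that the NW--SE diagonals represent successive tape contents (skipping every other diagonal as the footnote indicates, so tape length and time grow at the same rate), and that the bottom boundary of the grid holds the initial configuration of $M$ with arbitrary bits of $x$ to the right of the head. This is the Hanf/Simpson encoding, transplanted onto the sparse grid $\beta$ rather than onto the full plane.

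Next I would analyze $\settilings{\tau_S}$. By lemma~\ref{lem:twolines} (transferred to $\tau$) any tiling whose $\tau$-projection is not $\beta$ has at most one intersection of two black lines, hence carries at most a single isolated TM cell -- no actual computation. Thus such a tiling is determined by a finite amount of data: which ``wild'' configuration of figure~\ref{fig:otherconf} its $\tau$- and $T'$-projections realize, the finitely many positions/shifts involved, and the single optional TM cell at the unique intersection (if any). Call $O$ the set of all these tilings; each point of $O$ is individually computable, there are only countably many of them, and given a configuration as oracle we can decide membership in $O$ by inspecting the structure of the black lines in a growing window. The remaining tilings, those in $\settilings{\tau_S}\setminus O$, are exactly the shifts of $\beta$-based tilings carrying a full valid infinite computation of $M$ on some input $x$; since $M$ never halts on such a tiling, this input must lie in $S$.

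I would then define the recursive homeomorphism $\phi : \settilings{\tau_S}\setminus O \to S\times\ZZ^2$. Given a good tiling $c$, locate the corner tile $(30,30')$ of the underlying $\beta$-skeleton at some position $p \in \ZZ^2$; read off the oracle bits written on the bottom row of the grid, which form some $x \in S$; set $\phi(c) = (x,p)$. Both $\phi$ and its inverse are recursive: from $c$ one can locate $p$ and recover successive bits of $x$ in a uniform way, and from $(x,p) \in S\times\ZZ^2$ one can reconstruct the unique valid tiling by placing the $\beta$-skeleton shifted by $p$ and filling in the deterministic run of $M$ on $x$ on the grid points. Bijectivity follows from the uniqueness of $\beta$ up to shift (lemma~\ref{lem:twolines}) and determinism of $M$.

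The main obstacle is the bookkeeping at the interface between the geometric tileset and the TM layer: one has to convince oneself that the quadratic spacing $f(n) = (n+1)(n+2)/2-1$ together with the ``skip one diagonal out of two'' trick actually lets a diagonally-running TM simulate arbitrary many steps with the right tape growth, and that the oracle bits on the bottom row can be read in a genuinely bijective way (no spurious freedom that would blow up the fiber over a single $x$). A secondary subtle point is verifying that $O$ is computable in the strong sense required -- that membership in $O$ is decidable from the tiling as an oracle -- which relies on the fact that the ``wild'' non-$\beta$ configurations of figure~\ref{fig:otherconf} are recognizable from a bounded window once one has located the unique intersection or has certified that none exists.
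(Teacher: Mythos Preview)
Your proposal is correct and follows essentially the same approach as the paper: overlay a Turing machine $M$ (halting on oracle $x$ iff $x\notin S$) on the $\beta$-skeleton of $\tau$, take $O$ to be the non-$\beta$ tilings, and realize the homeomorphism by matching $(x,p)$ with the $\beta$-tiling having its corner at $p$ and oracle $x$. The paper's proof is in fact terser than yours---it simply places $x$ on an unmodifiable second oracle tape rather than on the bottom row, and dispatches the computability of $O$ in one line by appeal to lemma~\ref{lem:countable}---but the structure is identical, and your extra discussion of the homeomorphism and of the bookkeeping obstacles only makes explicit what the paper leaves implicit.
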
 
\begin{proof}
  This proof uses the construction of section~\ref{S:tileset}.
  Let $M$ be a Turing machine such that $M$ halts with $x$ as an oracle iff
$x\not\in S$.
  Take the tileset $\tau$ of section~\ref{S:tileset} and encode in it the Turing
machine $M$
  having as an oracle $x$ on an unmodifiable second tape.
  This gives us $\tau_M$, $O$ is the set all tilings except the $\beta$ ones. To
each $(x,p)\in
  S\times\ZZ^2$ we associate the $\beta$ tiling having a corner at position $p$
and having $x$
  on its oracle tape. It follows from lemma~\ref{lem:countable} that $O$ is
clearly computable.
  
\end{proof}
\begin{corollary}\label{corr:turingdegree}
 For any countable \pizu subset $S$ of $\left\{0,1\right\}^\NN$, there exists a
tileset $\tau$ having exactly the same Turing degrees.
\end{corollary}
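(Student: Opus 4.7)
The plan is to apply Theorem~\ref{thm:pizuhomtiling} with the given countable \pizu set $S$, obtaining a tileset $\tau_S$ and a computable set of configurations $O$ such that $\settilings{\tau_S}\setminus O$ is recursively homeomorphic to $S\times\ZZ^2$. Setting $\tau=\tau_S$, the task reduces to two checks: that the degrees appearing in $\settilings{\tau_S}\setminus O$ coincide with those of $S$, and that the ``junk'' part $O$ contributes no new Turing degrees.

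For the first check, a recursive homeomorphism preserves Turing degrees, so the degrees of points in $\settilings{\tau_S}\setminus O$ agree with those of $S\times\ZZ^2$. Since any $(x,p)\in S\times\ZZ^2$ satisfies $(x,p)\turequiv x$ (the coordinate $p\in\ZZ^2$ carries only finite, recursive information), the degrees of $S\times\ZZ^2$ are exactly the degrees of $S$.

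For the second check, I would inspect the explicit description of non-$\beta$ tilings from Section~\ref{S:tileset}. By Lemma~\ref{lem:twolines} and Lemma~\ref{lem:countable}, every configuration in $O$ has at most one vertical line of $T$, at most one horizontal line of $T'$, a bounded number of additional horizontal/vertical lines, and at most finitely many red signals, each at an integer offset; in other words, each element of $O$ is determined by a finite tuple of integer parameters (together with a finite ``shape'' label chosen from Figure~\ref{fig:otherconf}). From such parameters the tile at any position of $\ZZ^2$ is computed by a simple case analysis, so every element of $O$ has Turing degree $\mathbf{0}$.

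Finally, I would invoke the classical fact that every nonempty countable \pizu subset of $\{0,1\}^\NN$ contains a recursive point (for instance, since the Cantor--Bendixson derivative of such a set stabilises at $\emptyset$ at some countable successor ordinal, a maximal-rank point is isolated in an iterated derivative and can be produced recursively). Thus $\mathbf{0}$ already belongs to the degree spectrum of $S$, and adjoining the (all-recursive) degrees of $O$ to those of $S\times\ZZ^2$ leaves the set of Turing degrees of $\settilings{\tau}$ equal to the set of Turing degrees of $S$. The main obstacle is the second check: one must carefully verify that the parameter-to-configuration map is total and uniformly computable over all the shape classes in Figure~\ref{fig:otherconf}, so that no element of $O$ smuggles in a non-recursive degree.
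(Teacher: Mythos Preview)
Your proposal is correct and follows essentially the same route as the paper: apply Theorem~\ref{thm:pizuhomtiling}, observe that the recursive homeomorphism and the trivial $\ZZ^2$-factor make the degrees of $\settilings{\tau_S}\setminus O$ coincide with those of $S$, argue that every configuration in $O$ is recursive, and invoke the fact that a countable \pizu set already contains a point of degree~$\mathbf{0}$ (the paper cites Cenzer/Remmel~\cite{CenzerRec} for this rather than sketching the Cantor--Bendixson argument). Your treatment is simply more explicit than the paper's one-line proof; the only detail you might add is that in $\tau_M$ the non-$\beta$ tilings carry at most one Turing-machine cell (since they have at most one grid intersection), so the added TM/oracle layer contributes only a finite amount of extra data to each shape in Figure~\ref{fig:otherconf} and your finite-parameter description of $O$ survives.
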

\begin{proof}
 We know, from Cenzer/Remmel~\cite{CenzerRec}, that countable \pizu sets have
\textbf{0} (computable elements) in their set of Turing degrees, thus the
tileset $\tau_M$ described in the proof of theorem~\ref{thm:pizuhomtiling} has
exactly the same Turing degrees as $S$.
 
\end{proof}

\begin{theorem}\label{thm:tilerankpizu}
  For any countable \pizu subset $S$ of $\left\{ 0,1 \right\}^\NN$ there exists
a tileset $\tau_S$ such
  that $CB(\settilings{\tau_S}) = CB(S)+11$.
\end{theorem}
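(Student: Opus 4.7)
The plan is to apply Theorem~\ref{thm:pizuhomtiling} to obtain $\tau_S=\tau_M$ and to analyze the Cantor--Bendixson derivatives of $X:=\settilings{\tau_S}$. Recall the decomposition $X = T_{\mathrm{comp}}\sqcup O$, where $T_{\mathrm{comp}}$ collects the $\beta$-shape tilings carrying a valid $M$-computation (recursively homeomorphic to $S\times\ZZ^2$ by Theorem~\ref{thm:pizuhomtiling}) and $O$ is the computable countable set of the remaining configurations, which by Lemma~\ref{lem:twolines} carry at most one grid intersection and hence no actual computation.

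The first key observation is that $T_{\mathrm{comp}}$ is \emph{open} in $X$: it is the union over $p\in\ZZ^2$ of the cylinders ``corner tile $(30,30')$ at position $p$''. Openness forces isolation in $X$ to coincide with isolation in $T_{\mathrm{comp}}$ for points of $T_{\mathrm{comp}}$, and by transfinite induction $X^{(\lambda)}\cap T_{\mathrm{comp}}=T_{\mathrm{comp}}^{(\lambda)}$, the derivative computed in $T_{\mathrm{comp}}$. Pulling back through the homeomorphism $T_{\mathrm{comp}}\cong S\times\ZZ^2$, and using that $\ZZ^2$ is discrete so that $(S\times\ZZ^2)^{(\lambda)}=S^{(\lambda)}\times\ZZ^2$, gives $T_{\mathrm{comp}}^{(\lambda)}\cong S^{(\lambda)}\times\ZZ^2$, which is empty precisely at $\lambda=CB(S)$. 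Hence $X^{(CB(S))}\subseteq O$.

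The second step is to control the interaction between $O$ and $T_{\mathrm{comp}}$. The configurations in $O\cap\overline{T_{\mathrm{comp}}}$ are exactly the ``grid-without-corner'' tilings obtained as limits $C(x_n,p_n)$ with $|p_n|\to\infty$; they cannot be isolated in $X^{(\lambda)}$ as long as $T_{\mathrm{comp}}^{(\lambda)}$ is nonempty, so they survive the first $CB(S)$ derivatives regardless of their own rank within $O$. Once $\lambda\geq CB(S)$, the derivation acts entirely inside $O$, and the remaining structure is exactly the non-$\beta$ part of $\settilings{\tau}$ together with its limits of the $\beta$-orbit, i.e.\ the setup of Lemma~\ref{lem:rankT}, of Cantor--Bendixson rank $11$ (the irrelevant TM cell content at isolated grid intersections is finite and adds no accumulation).

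Combining the two phases yields $CB(X)=CB(S)+11$. Intuitively, in the rank-$12$ structure of $\settilings{\tau}$ the rank-$1$ $\beta$-orbit has been replaced by the rank-$CB(S)$ object $T_{\mathrm{comp}}$, giving total $CB(S)+(12-1)=CB(S)+11$. The main difficulty is this synchronization: one must verify that the derivatives of the limit configurations dovetail correctly with those of $T_{\mathrm{comp}}$ (so that neither side is killed prematurely) and that, once $T_{\mathrm{comp}}$ vanishes, the residual $O$-structure does reproduce Lemma~\ref{lem:rankT}'s rank-$11$ analysis without any extra accumulation being introduced by the encoding of $M$.
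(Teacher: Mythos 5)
Your proof takes essentially the same route as the paper: split $\settilings{\tau_S}$ into the $\beta$-tilings carrying computations (of rank $CB(S)$ via the identification with $S\times\ZZ^2$) and the residual rank-$11$ structure of Lemma~\ref{lem:rankT}, then add the two contributions. The paper's own argument is in fact nothing more than your closing ``intuitive'' count --- it does not address the synchronization issue you flag --- so your openness observation and the computation $X^{(\lambda)}\cap T_{\mathrm{comp}}=T_{\mathrm{comp}}^{(\lambda)}$ supply strictly more detail than the original.
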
 
\begin{proof}
 Lemma~\ref{lem:rankT} states that $\settilings{\tau}$ is of Cantor-Bendixson
rank 12, 11 without $\alpha$. In the tileset $\tau_M$ of the previous proof, the
Cantor-Bendixson rank of the contents of the tape is exactly $CB(S)$, hence 
$CB(\settilings{\tau_S})=CB(S)+11$.

\end{proof}

From Ballier/Durand/Jeandel \cite{BDJSTACS} we know that for any tileset $X$, if
$CB(\settilings{X})\geq 2$, then $X$ has only recursive points. Thus an optimal
construction
improves the Cantor-Bendixson rank by at least 2.

\begin{corollary}\label{cor:ranksofic}
  For any countable \pizu subset $S$ of $\left\{ 0,1 \right\}^\NN$ there exists
a sofic subshift $X$ such
  that $CB(X) = CB(S)+2$.
\end{corollary}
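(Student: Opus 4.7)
The plan is to apply a sofic factor (a sliding block code) to the SFT $\settilings{\tau_S}$ of Theorem~\ref{thm:pizuhomtiling}, whose Cantor--Bendixson rank is $CB(S)+11$, so that the image has rank exactly $CB(S)+2$. The large value $11$ in Theorem~\ref{thm:tilerankpizu} comes from the many types of scaffolding configurations catalogued in figure~\ref{fig:otherconf} (roughly six hierarchical types in each of the layers $T$ and $T'$ of $\tau$); a local factor map can collapse this scaffolding down to rank $2$ while leaving the information contained in the $\beta_{x,p}$-tilings intact.

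Concretely I would define $\phi$ on the alphabet of $\tau_S$ so as to retain only (i) the content of the Turing-machine computation cell sitting at a grid intersection (when there is one), and (ii) a binary marker indicating whether a site is the corner tile $(30,30')$. All the other combinatorial features---the vertical/horizontal black line decorations, the red and bottom signals, the direction in which a square was closed, even the layer a line comes from---get erased. On a $\beta_{x,p}$-tiling the image $\phi(\beta_{x,p})$ still shows the corner at $p$ together with the full computation of $M$ on the oracle $x$ along the north-east diagonal, so that the assignment $(x,p)\mapsto \phi(\beta_{x,p})$ remains injective. On every non-$\beta$ tiling there is at most one grid intersection and hence no real computation; $\phi$ is tuned so that the union of all such images forms a sofic subshift $Y\subseteq X$ of Cantor--Bendixson rank exactly $2$.

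The rank of $X=\phi(\settilings{\tau_S})$ is then computed by transfinite induction. The key claim is that $\phi(\beta_{x,p})$ is isolated in $X^{(\lambda)}$ iff $x$ is isolated in $S^{(\lambda)}$: a window around the corner at $p$ reveals arbitrarily long prefixes of the oracle $x$ and simultaneously pins down the position $p$, while points of $Y$ contain no corner and are therefore topologically separated from every $\phi(\beta_{x,p})$, except in the role of being the common limit of the $\beta$-orbits as $p\to\infty$. It follows that $X^{(\lambda)}=\{\phi(\beta_{x,p}):x\in S^{(\lambda)},\ p\in\ZZ^2\}\cup Y$ for every $\lambda\le CB(S)$, so that $X^{(CB(S))}=Y$, which needs exactly two further derivatives to vanish, giving $CB(X)=CB(S)+2$.

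The main obstacle is engineering $\phi$ so that the quotient $Y$ has Cantor--Bendixson rank \emph{exactly} $2$: collapsing the scaffolding all the way to a single uniform fixed point would yield only $CB(X)=CB(S)+1$, while keeping too much of the structure leaves $CB(Y)$ strictly larger than $2$. A clean remedy is to retain, beyond the computation marker, just one extra bit distinguishing ``grid visible'' from ``empty background,'' so that $Y$ consists of the shift-orbit of ``one-highlighted-cell'' configurations (all isolated, rank $1$) together with their unique uniform limit (rank $2$), which gives rank $2$ by inspection and makes the induction above go through.
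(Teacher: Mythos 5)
Your construction is essentially the paper's own proof: the paper likewise applies the letter-to-letter projection of $\settilings{\tau_S}$ that keeps only the Turing-machine tape symbols (which live at the grid intersections) and blanks everything else, so that besides the images of the $\beta$-tilings the only configurations are the all-blank one and the ``one symbol somewhere'' ones, which contribute exactly the extra rank $2$. Your corner marker is redundant (the sparse grid of computation cells already pins down $p$), and you should read your ``grid visible'' bit as marking intersection cells only --- marking entire grid lines would leave whole-line configurations in $Y$ and inflate its rank past $2$ --- but your final description of $Y$ as the orbit of one-highlighted-cell configurations together with their uniform blank limit is precisely the paper's list of residual configurations, and the transfinite induction you sketch is the (implicit) rank computation of the paper.
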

\begin{proof}
 Take a projection that just keeps the symbols of the Turing machine tape
$\tau_M$ of the proof of theorem~\ref{thm:pizuhomtiling} and maps everything
else to a blank symbol. Recall the Turing machine tape cells are the
intersections of the vertical lines and horizontal lines. This projection leads
to 
 3 possible configurations :
 \begin{itemize}
  \item a completely blank configuration,
  \item a completely blank configuration with only one symbol somewhere,
  \item a configuration with a white background and points corresponding to the
intersections in the sparse grid of figure~\ref{fig:layers}c.
 \end{itemize}

\end{proof}
\bibliographystyle{plain}
\bibliography{biblio,books}
\end{document}